\documentclass[conference,letterpaper]{IEEEtran}

\usepackage[utf8]{inputenc} 
\usepackage[T1]{fontenc}
\usepackage{url}
\usepackage{ifthen}
\usepackage{cite}
\usepackage[cmex10]{amsmath} 
\usepackage{amssymb,amsfonts}
\usepackage{mathtools}
\usepackage{algorithmic}
\usepackage{graphicx}
\usepackage{textcomp}
\usepackage{xcolor}
\usepackage{bm}
\usepackage{algorithm}
\usepackage{amsthm}
\theoremstyle{definition}
\newtheorem{definition}{Definition}
\newtheorem{remark}{Remark}
\theoremstyle{plain}
\newtheorem{proposition}{Proposition}
\newtheorem{theorem}{Theorem}

\newtheorem{corollary}{Corollary}

\newcommand{\N}{\mathbb{N}}
\newcommand{\F}{\mathbb{F}}

\newcommand{\1}{\mathbf{1}}

\usepackage{booktabs}
\usepackage{array}
\usepackage{multirow}
\usepackage{multicol}

\usepackage[top=1.047in, bottom=1.04in, left=0.673in, right=0.673in]{geometry}

\begin{document}
\title{Task-Oriented Boolean Function Computation: Practical Code Constructions} 

\author{%
  \IEEEauthorblockN{Yangshuo~He and Guanding~Yu}
  \IEEEauthorblockA{College of Information Science \& Electronic Engineering \\
                    Zhejiang University\\
                    Hangzhou\\
                    Email: \{sugarhe, yuguanding\}@zju.edu.cn}
  \and
  \IEEEauthorblockN{Jingge~Zhu}
  \IEEEauthorblockA{Department of Electrical and Electronic Engineering\\ 
                    University of Melbourne\\
                    Melbourne\\
                    Email: jingge.zhu@unimelb.edu.au}
}

\maketitle

\begin{abstract}
   Task-oriented communication conveys information that is necessary for downstream tasks. For binary decision tasks, this paradigm is information-theoretically formalized by Boolean function computation (BFC) via channels, where the receiver aims to determine the value of a function unknown to the transmitter. In this paper, we devise a practical code construction for the BFC problem based on a Reed--Solomon code. For noiseless binary channels, we derive finite-blocklength worst-case error bounds. By defining a rate function that captures how supported message length scales with channel uses, we characterize the rate-reliability tradeoff for different Boolean function families. With respect to this rate function, the proposed construction achieves an asymptotic computation rate of $1/2$. We further extend this construction to noisy channels by packing multiple BFC tasks into a single block and concatenating them with a conventional channel code. The corresponding finite-blocklength guarantees are expressed in terms of effective channel uses per function evaluation. With a channel code rate $R_c$, this construction achieves an asymptotic computation rate of $R_c/2$, yielding $C/2$ when capacity-achieving channel codes are employed. Numerical results illustrate the derived bounds and demonstrate substantial performance gains over conventional transmission. As examples, the proposed scheme achieves SNR coding gains of approximately $3.4$ and $6.4$~dB at an average task error probability of $10^{-5}$ for the exact-weight and rank-test tasks, respectively.
\end{abstract}

\section{Introduction}
\label{sec:introduction}
In many edge communication systems, instead of reconstructing the entire transmitted message, the receiver only requires partial information to execute a specific task. This motivates a paradigm shift from message transmission to task-oriented communications, where performance is measured by task accuracy rather than bit error rate~\cite{beyond-bits,task-oriented}. A fundamental and practical class of tasks is binary decisions, such as membership and threshold tests. These can be viewed as evaluating a Boolean function that represents the propositional formula, connecting communication with logical decision tasks~\cite{propositional-logic}.

Boolean function computation (BFC) via channels formalizes this setting~\cite{BFC,BFC-capacity}. Specifically, the transmitter owns a binary message $i$, while the receiver holds a function $f$ from a known family and aims to determine the function value $f(i)$. Importantly, the transmitter has no information about the selected function $f$, thus directly transmitting the result is infeasible. The primary problem in BFC via channels is determining the number of channel uses required to reliably compute a query for a given message length.

Identification via channels~\cite{ID} is a special case of BFC, where the receiver determines whether the transmitter's message matches a particular message. With a randomized encoder, a number of messages doubly exponential in the channel blocklength can be reliably identified. 
This differs fundamentally from the exponential growth in conventional transmission. Recent work in~\cite{ID-task} also motivates identification through task-oriented communication in a joint identification and sensing scenario with noisy feedback.
Meanwhile, BFC covers a broader collection of binary decision tasks. Achievability and converse results for Boolean function families characterized by Hamming weights are established in~\cite{BFC-capacity}. Via the notion of \textit{rate function}, their results characterize the fundamental limit on how the message length can scale with the channel uses asymptotically under BFC. However, these asymptotic results fail to provide an explicit finite-blocklength coding scheme.

Reed--Solomon (RS) codes provide an explicit method for code construction in identification via channels. Early work~\cite{CWC} established capacity-achieving identification codes based on constant-weight constructions. Polynomial universal-hash constructions for identification appeared in~\cite{universal-hashing}, and the RS tagging representation was developed for watermark identification in~\cite{RS-tagging}. Later works analyzed capacity-achieving concatenated RS tagging codes~\cite{RS-performance} and implemented them on software-defined radios~\cite{RS-implementation}. For noisy channels, the noiseless RS tagging code is commonly concatenated with a standard channel code~\cite{RS-tagging,ID-tutorial}. However, the existing RS tagging constructions are tailored for identification and do not directly yield finite-blocklength codes for general BFC problems.

In this paper, we develop the RS tagging code construction for BFC via channels. Over a noiseless binary channel, we derive a finite-blocklength worst-case error bound and a rate-reliability tradeoff. We then prove that our code construction attains an asymptotic achievable computation rate with the same optimal rate function identified in~\cite{BFC-capacity}. For noisy channels, we extend this construction by packing multiple tagging codes into one block and concatenating with a standard channel code. Both finite-blocklength and asymptotic results are derived in terms of the number of effective channel uses per BFC message. Simulations demonstrate the effectiveness of the proposed code constructions, showing significant SNR coding gains of $3.4$~dB for the exact-weight task and $6.4$~dB for the rank-test task over conventional full-message transmission.

\section{BFC Problem Formulation}
\label{sec:system_model}
Let $i\in\mathcal{I}_m:=\{0,1\}^m$ denote the binary message of length $m$ at the transmitter. The receiver selects a Boolean function $f$ from a family $\mathcal F_m:=\{f:\mathcal{I}_m\to\{0,1\}\}$. Although the function family $\mathcal F_m$ is shared by both transmitter and receiver, the particular $f$ is unknown to the transmitter. Instead of estimating the complete message $i$, the receiver only needs to recover $f(i)$. Thus, $f$ represents the receiver's task.

Let $f^{-1}(1):=\{i\in\mathcal{I}_m:f(i)=1\}$ denote the pre-image of 1 under $f$. Then, the Hamming weight of $f$ is defined as
\begin{equation}
    S_f:=|f^{-1}(1)|.
\end{equation}
We primarily consider the class of functions with Hamming weight no larger than $S$, denoted by
\begin{equation}
    \begin{split}
    \mathcal F_m(\leq S) & :=\{f:\mathcal{I}_m\to \{0,1\}:S_f\leq S\}.
    \end{split}
\end{equation}
In particular, computing the functions in $\mathcal F_m(1)$ is equivalent to the classic identification via channels problem~\cite{ID}.

Consider a channel $W(\cdot|x^n)$ over space $\mathcal{Y}^n$ given input codeword $x^n\in\mathcal{X}^n$.
As defined in~\cite{BFC}, an $(n,m,\mathcal F_m,\lambda_1,\lambda_2)$ BFC code consists of a stochastic encoder $Q_i$ on codeword space $\mathcal{X}^n$ for all $i\in\mathcal{I}_m$ and a decoding region $D_f\subseteq \mathcal{Y}^n$ for all $f\in \mathcal F_m$, such that the false-negative (FN) and false-positive (FP) probabilities are bounded by

\begin{align}        
    P_{\rm FN}:=\max_{\substack{i,f:f(i)=1}}
    \sum_{x^n}Q_i(x^n)W(D_f^{\rm c}|x^n)
    &\leq \lambda_1, \label{equ:BFC_FN}\\
    P_{\rm FP}:=\max_{\substack{i,f:f(i)=0}}
    \sum_{x^n}Q_i(x^n)W(D_f|x^n)
    &\leq \lambda_2. \label{equ:BFC_FP}
\end{align}

Unlike conventional transmission, the maximal message length that can be reliably supported need not scale linearly with the channel uses $n$ in the BFC problem~\cite{BFC,BFC-capacity}. In fact, the scaling behavior of $m$ depends on the Boolean function family. Therefore, we utilize a general rate function to characterize the asymptotic performance.
\begin{definition}
\label{def:finite_rate}    
Let $m,n\in \N$ and $L:\mathbb{R}^{+}\times\mathbb{N}\to\mathbb{R}^{+}$
be a rate function. The computation rate of an $(n,m,\mathcal F_m,\lambda_1,\lambda_2)$ BFC code is $R$ with the rate function $L$ if $m =L(R,n)$.
\end{definition}
The function $L$ specifies the scaling behavior of the supported message length, while $R$ is the leading constant.
For example, $L(R,n)=nR$ represents the linear scaling in Shannon transmission, whereas $L(R,n) = 2^{nR}$ represents the exponential scaling in identification via channels ($S=1$)~\cite{ID}.

The above formulations are independent of the channel model and apply to both the noiseless and noisy settings. In Section~\ref{sec:noiseless}, we specialize the BFC problem to a noiseless binary channel, propose an RS tagging construction, and characterize the rate functions for different Boolean function families. In Section~\ref{sec:noisy}, we extend it to noisy channels by concatenating several tagging codewords with a channel code.
\begin{figure}[b]
    \centering
    \includegraphics[width=\columnwidth]{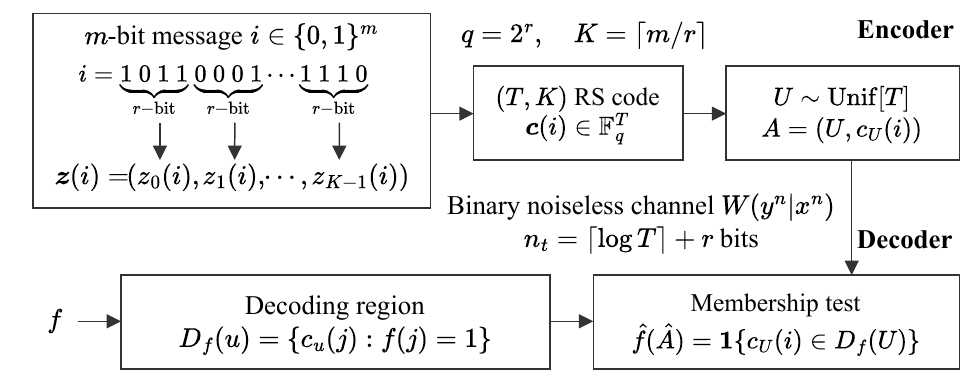}
    \caption{RS tagging construction for BFC over a noiseless channel}
    \label{fig:BFC_code}
\end{figure}
\section{BFC Codes over a Noiseless Channel}
\label{sec:noiseless}
\subsection{RS Tagging Code Construction}
\label{sec:tagging_code}
We adapt the polynomial-hash RS tagging construction in~\cite{universal-hashing,RS-tagging}.
Let the size of the finite field be $q=2^r$ and set $K=\lceil m/r\rceil$. The binary message $i$ of length $m$ is partitioned into $K$ consecutive $r$-bit blocks. Each of these blocks can be interpreted as a symbol over $\F_q$. Therefore, the message can be denoted as $\bm z(i)=(z_0(i),\cdots,z_{K-1}(i))\in \F_q^K$, as shown in Fig.~\ref{fig:BFC_code}. In a practical implementation, zero-padding is applied when $r$ does not divide $m$. We then utilize a $(T,K)$ RS code over $\F_q$ to construct a BFC code in a noiseless channel. Such a code outputs a length-$T$ codeword over $\F_q$ from a length-$K$ sequence over $\F_q$, requiring $1\leq K\leq T\leq q$. Specifically, each message $\bm z(i)$ is associated with a polynomial
\begin{equation}
    P_i(x):=
    \sum_{k=0}^{K-1}z_k(i)x^k.
\end{equation}
For all $N\in\N$, let $[N]:=\{1,\cdots,N\}$. The corresponding RS codeword is given by $\bm c(i)=(c_1(i),\cdots,c_T(i))$, where $c_\ell(i)=P_i(\alpha_\ell)$ and $(\alpha_\ell)_{\ell\in[T]}$ are distinct elements of $\F_q$.

The stochastic BFC encoder draws an index $U\sim\mathrm{Unif}[T]$ and selects RS symbol $c_U(i)$ as tag to form a BFC codeword:
\begin{equation}
    A(i) = (U, c_U(i))\in \mathcal{A}:= [T]\times \F_q.
\end{equation}
Each RS tagging codeword $A(i)$ has a unique binary representation of length $n_t = \lceil\log T\rceil + r$: $\lceil\log T\rceil$ bits for $U$ and
$r$ bits for $c_U(i)$. All logarithms are to base two.
The $n_t$-use noiseless binary bit pipe is characterized by $W(y^{n_t}|x^{n_t}) = \1\{y^{n_t}=x^{n_t}\}$, and has capacity one bit per channel use.

For a chosen Boolean function $f$ at the receiver, we define the symbol-level decoding region at position $u$ as
\begin{equation}
    \label{equ:symbol_decoding_region}
    D_f(u):=\{c_u(j):j\in f^{-1}(1)\}.
\end{equation}
For a received codeword $a=(u,c_u)$, the decoder declares
\begin{equation}
    \hat f(a):=\1\{c_u \in D_f(u)\}.
\end{equation}

\subsection{Finite-Blocklength Results}
After formulating the code construction, we turn to its error analysis. Due to the tagging construction, the error analysis for BFC reduces to a collision analysis between RS codeword symbols. Applying the algebraic structure of RS codes~\cite{RS-code} gives the following finite-blocklength error guarantee.
\begin{proposition}
\label{prop:finite_error}
For $0\leq S\leq 2^m$ and $n_t\geq 2$, consider the RS tagging construction in Subsection~\ref{sec:tagging_code} based on a $(T,K)$ RS code over $\F_q$, satisfying $1\leq K\leq T\leq q$ and $n_t=\lceil\log T\rceil+r$. Then, the construction is an
\begin{equation}
    \left(
        n_t,m,\mathcal F_m(\leq S),
        0,
        \min\left\{1,\frac{S(K-1)}{T}\right\}
    \right)
\end{equation}
BFC code over the noiseless binary channel.
\end{proposition}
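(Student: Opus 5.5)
We check the two error types and the codeword length separately, using only the root-counting bound for polynomials over $\F_q$.

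\textbf{Length and false negatives.} Since $U$ takes at most $T$ values and $c_U(i)\in\F_q$ with $q=2^r$, the pair $A(i)$ fits into $n_t=\lceil\log T\rceil+r$ bits. Over the noiseless bit pipe the receiver sees $a=(U,c_U(i))$ exactly. If $f(i)=1$, then $i\in f^{-1}(1)$, so $c_U(i)\in D_f(U)$ by definition~\eqref{equ:symbol_decoding_region}. Hence $\hat f(a)=1$ for every realization of $U$, and $P_{\rm FN}=0$, which gives $\lambda_1=0$.

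\textbf{False positives.} Fix $i$ and $f\in\mathcal F_m(\leq S)$ with $f(i)=0$. An error occurs exactly when $c_U(i)\in D_f(U)$, i.e., when there is some $j\in f^{-1}(1)$ with $c_U(i)=c_U(j)$. Every such $j$ satisfies $j\neq i$. We apply a union bound over the at most $S$ such $j$:
\begin{equation*}
\Pr[\hat f(A(i))=1]\le \sum_{j\in f^{-1}(1)} \Pr_{U}\bigl[P_i(\alpha_U)=P_j(\alpha_U)\bigr].
\end{equation*}
For $j\neq i$, we need $\bm z(j)\neq\bm z(i)$. Zero-padding is a fixed injective map, so this holds, and therefore $P_i-P_j$ is a nonzero polynomial of degree at most $K-1$. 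It thus has at most $K-1$ roots in $\F_q$. Because the $\alpha_\ell$ are distinct, at most $K-1$ of the $T$ indices $\ell$ produce a collision. Since $U$ is uniform on $[T]$, each term is at most $(K-1)/T$. This gives $P_{\rm FP}\le S(K-1)/T$. The bound $P_{\rm FP}\le 1$ is trivial, so $\lambda_2=\min\{1,S(K-1)/T\}$. The case $S=0$ is covered automatically, because then $D_f(u)=\emptyset$.

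\textbf{Where care is needed.} The main point to verify is the injectivity of $i\mapsto \bm z(i)$ including padding, which is what makes $P_i-P_j$ nonzero. We must also confirm that the counting uses distinct evaluation points, which is guaranteed by $T\le q$. The hypothesis $n_t\ge2$ only ensures the parameters are nondegenerate and plays no role in the bounds.
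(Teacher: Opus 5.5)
Your proposal is correct and follows the paper's proof essentially step for step. False negatives are zero because a positive message's tag always lies in $D_f(U)$. False positives are bounded by a union bound over $j\in f^{-1}(1)$, using that the nonzero polynomial $P_i-P_j$ of degree at most $K-1$ agrees at no more than $K-1$ of the $T$ distinct evaluation points. Your extra checks (injectivity of the zero-padded symbol map, the $S=0$ case, and the codeword length) are details the paper leaves implicit, and they do not change the argument.
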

\begin{proof}
    Suppose $f(i)=1$, then the tag $c_u(i)\in D_f(u)$ for all $u\in[T]$ by definition of \eqref{equ:symbol_decoding_region}. Therefore, the decoder always declares $\hat{f}(a)=1$ given $a=(u,c_u(i))$, and FN probability is zero.

    Suppose $f(i)=0$, the FP occurs if there exists a message $j\neq i$ and $j\in f^{-1}(1)$ such that
    \begin{equation}
        \label{equ:collision}
        P_i(\alpha_u) = P_j(\alpha_u).
    \end{equation}
    Thus, the FP probability for a pair $(i,f)$ is given by
    \begin{equation}
        \label{equ:fp}
        \begin{split}
            P_{\rm FP}(i,f) & = \Pr\{\hat f(a(i))=1|i,f:f(i)=0\}  \\
            & = \frac{1}{T}\sum_{u=1}^T\1\{\exists\,j\in f^{-1}(1): P_i(\alpha_u) = P_j(\alpha_u)\} \\
            & = \frac{1}{T}\sum_{u=1}^T\1\left\{\bigcup_{j\in f^{-1}(1)}\{P_i(\alpha_u) = P_j(\alpha_u)\}\right\} \\
            & \leq  \frac{1}{T}\sum_{u=1}^T\sum_{j\in f^{-1}(1)}\1\{P_i(\alpha_u) = P_j(\alpha_u)\}
        \end{split}
    \end{equation}
    The property of $(T,K)$ RS code gives: for all $j\neq i$, the difference $P_i(x)-P_j(x)$ is a non-zero polynomial of degree at most $K-1$. Therefore, $\sum_{u=1}^T\1\{P_i(\alpha_u) = P_j(\alpha_u)\}\leq K-1$, yielding
    \begin{equation}
        \begin{split}
            P_{\rm FP}(i,f) & \leq \sum_{j\in f^{-1}(1)}\frac{K-1}{T} \\
            & \leq |f^{-1}(1)|\frac{K-1}{T}\\
            &= \frac{S_f(K-1)}{T} \\
            & \leq \frac{S(K-1)}{T}.
        \end{split}
    \end{equation}
    Combining with $P_{\rm FP}\leq 1$ and taking the maximum over all $(i,f)$ yields
    \begin{equation}
        \label{equ:fp_bound}
        P_{\rm FP} = \max_{i,f:f(i)=0}P_{\rm FP}(i,f)\leq\min\left\{1,\frac{S(K-1)}{T}\right\}.
    \end{equation}
\end{proof}
Proposition~\ref{prop:finite_error} gives an error guarantee for the RS tagging construction in terms of RS code parameters. The bound illustrates that a larger Hamming weight $S$ leads to a higher collision probability, and a larger $K$ carries more information bits but increases collision probability. Increasing $T$ instead adds positional randomization and lowers collision probability. 

Although Proposition~\ref{prop:finite_error} is expressed in terms of RS code parameters, they are implicitly coupled with the blocklength $n_t$. Specifically, $n_t=\lceil\log T\rceil + r$ and $K=\lceil m/r\rceil$, while RS feasibility requires $K\leq T\leq 2^r$.
Therefore, the error probability bound can be relaxed to $S(m)(K-1)/T\leq mS(m)/(rT)$. For fixed $n_t$, minimizing this bound is equivalent to maximizing the denominator $rT=(n_t-\lceil \log T\rceil)T$, subject to $\lceil \log T\rceil\leq n_t-\lceil \log T\rceil$. Since this expression is non-decreasing in $\lceil \log T\rceil$, the optimum is attained at
\begin{equation}
    r=\lceil n_t/2\rceil,\quad T=2^{\lfloor n_t/2\rfloor}.
\end{equation}
Substituting these RS parameters in Proposition~\ref{prop:finite_error} yields the following explicit finite-blocklength tradeoff.

\begin{theorem}
    \label{thm:finite_E2}
    Let $S:\N\to\N$ be non-decreasing and define
    \begin{equation}
        \label{equ:rate_function}
        L_S(R,n):=\max\left\{m\in\N:mS(m)\leq\left\lceil\frac{n}{2}\right\rceil2^{\left\lfloor \frac{n}{2}\right\rfloor}2^{n\left(R-\frac{1}{2}\right)}\right\}.
    \end{equation}
    For all $E_2\in(0,1/2)$ and $n_t\in\N$ with $n_t\geq 2$ such that the set in \eqref{equ:rate_function} is non-empty, set $r=\lceil n_t/2\rceil$, $T=2^{\lfloor n_t/2\rfloor}$, and $m=L_S(1/2-E_2,n_t)$. This yields an
    \begin{equation}
        (n_t, m, \mathcal F_m(\leq S(m)), 0, 2^{-n_tE_2})
    \end{equation}
    BFC code and the finite-blocklength computation rate of this code is $1/2-E_2$ with the rate function $L_S$.
\end{theorem}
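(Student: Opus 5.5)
The plan is to apply Proposition~\ref{prop:finite_error} with the parameters $r=\lceil n_t/2\rceil$ and $T=2^{\lfloor n_t/2\rfloor}$, and then bound the resulting false-positive term using the definition of $L_S$.

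\emph{Step 1: feasibility.} First I will check that these parameters satisfy the hypotheses of Proposition~\ref{prop:finite_error}.
\begin{itemize}
\item Since $\lceil\log T\rceil=\lfloor n_t/2\rfloor$, we get $n_t=\lceil\log T\rceil+r$.
\item $T=2^{\lfloor n_t/2\rfloor}\leq 2^{\lceil n_t/2\rceil}=q$.
\item $K=\lceil m/r\rceil\geq 1$, because $m\geq 1$ whenever the set in \eqref{equ:rate_function} is non-empty.
\end{itemize}
The remaining condition, $K\leq T$, will follow from the error bound in Step 2. Indeed, that step gives $S(m)(K-1)/T\leq 2^{-n_tE_2}<1$, and since $S(m)\geq 1$ this means $K-1<T$, i.e.\ $K\leq T$. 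If $S(m)=0$ the false-positive bound is trivially zero. I would also remark that the FP bound of Proposition~\ref{prop:finite_error} only uses that distinct messages give distinct polynomials of degree $<K$ evaluated at $T$ distinct points. This needs only $T\leq q$, so $K\leq T$ is not essential there and could be dropped if Step 2 turned out not to give it cleanly.

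\emph{Step 2: error bound.} Proposition~\ref{prop:finite_error} gives FN probability $0$ and FP probability at most $S(m)(K-1)/T$. Using $K-1<m/r$, this is at most $mS(m)/(rT)$. With $R=1/2-E_2$, the definition of $m=L_S(R,n_t)$ gives
\[
mS(m)\leq \lceil n_t/2\rceil 2^{\lfloor n_t/2\rfloor}2^{-n_tE_2}=rT\,2^{-n_tE_2}.
\]
Dividing by $rT$ then yields $P_{\rm FP}\leq 2^{-n_tE_2}$, as claimed.

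\emph{Step 3: family and rate.} Since every $f$ with $S_f\leq S(m)$ satisfies the bound of Step 2, the code handles the family $\mathcal F_m(\leq S(m))$. The computation-rate claim is then immediate from Definition~\ref{def:finite_rate}, because $m=L_S(1/2-E_2,n_t)$ by construction.

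\emph{Main obstacle.} I expect the only delicate point to be the RS feasibility condition $K\leq T$, which the choice of $m$ does not obviously guarantee. I will handle it through the strict inequality $S(m)(K-1)/T<1$ derived in Step 2, together with the observation that Proposition~\ref{prop:finite_error}'s argument does not actually need $K\leq T$. The monotonicity of $S$ is not needed for this particular bound; it matters only for the well-definedness and asymptotics of $L_S$.
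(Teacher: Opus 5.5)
Your proposal is correct and follows essentially the same route as the paper: check the RS parameter constraints, apply Proposition~\ref{prop:finite_error}, and bound $S(m)(K-1)/T\le mS(m)/(rT)\le 2^{-n_tE_2}$ using the definition of $L_S$. The paper obtains $K\le T$ directly from $mS(m)<rT$ and $S(m)\ge 1$, which is the same arithmetic you use. Your ordering is therefore not actually circular, because the inequality you invoke never uses Proposition~\ref{prop:finite_error}.
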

\begin{proof}
    By the definition of \eqref{equ:rate_function} and $E_2\in(0,1/2)$, the selected $m$ satisfying
    \begin{equation}
        mS(m)\leq rT 2^{-n_tE_2} < rT.
    \end{equation}
    Since $S(m)\geq 1$, this implies that $m<rT$ and thus $K=\lceil m/r\rceil\leq T$
    With $q=2^r$, the selected $r=\lceil n_t/2\rceil$ and $T=2^{\lfloor n_t/2\rfloor}$ give $T\leq q$. Therefore, these selected RS parameters satisfying $1\leq K\leq T\leq q$ and resulting an RS tagging code of length
    \begin{equation}
        \lceil \log T\rceil + r = \lfloor n_t/2\rfloor + \lceil n_t/2\rceil = n_t.
    \end{equation}
    
    Since $K-1\leq m/r$, Proposition~\ref{prop:finite_error} gives
    \begin{equation}
        \begin{split}
            P_{\rm FP} & \leq \frac{S(m)(K-1)}{T} \\
            & \leq \frac{mS(m)}{rT} \\
            & \leq \frac{rT}{rT}2^{-n_tE_2} \\
            & = 2^{-n_tE_2}.
        \end{split}
    \end{equation}

    The finite-blocklength computation rate of $1/2-E_2$ follows directly from Definition~\ref{def:finite_rate}.
\end{proof}
Based on the error exponent $E_2$, Theorem~\ref{thm:finite_E2} explicitly characterizes the tradeoff between error probability and the supported message length. In particular, a larger $E_2$ yields a faster error decay rate, while reducing the argument $1/2-E_2$ of rate function, resulting in a shorter supported message. 
To illustrate this finite-blocklength tradeoff, we consider identification, rank-test, and exact-weight tasks detailed in Subsection~\ref{sec:noiseless_numerical}.
For fixed $n_t=60$, Fig.~\ref{fig:rate_compare}(a) plots the rate $\log(m)/n_t$ against $E_2$ and shows that increasing $E_2$ exponentially reduces the FP bound $2^{-60E_2}$ while decreasing the computation rate.

\begin{figure}[t]
  \centering
  \includegraphics[width=\columnwidth]{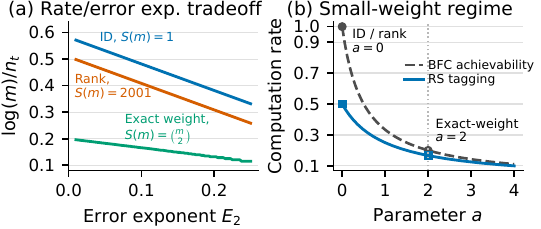}
  \vspace{-0.75cm}
  \caption{Computation rate of RS tagging BFC codes for the noiseless binary channel: (a) Rate $\log(m)/n_t$ vs. error exponent $E_2$ at $n_t=60$, with $m$ chosen using \eqref{equ:rate_function}; (b) Achievable rate $R$ in the small-weight regime with $\log S(m)/\log m\to a$, compared with the BFC achievability~\cite{BFC-capacity}.}
  \label{fig:rate_compare}
\end{figure}

\subsection{Asymptotic Results}

The finite-blocklength tradeoff in Theorem~\ref{thm:finite_E2} immediately suggests an asymptotic construction and gives achievability results. Specifically, for all fixed, arbitrarily small $E_2>0$, the FP probability vanishes exponentially with the blocklength while the computation rate $1/2-E_2$ can be arbitrarily close to $1/2$. In this subsection, we formalize this asymptotic characterization and relate the resulting scaling behavior of the proposed code construction to the fundamental BFC limits in~\cite{BFC,BFC-capacity}. 

\begin{corollary}
    \label{cor:achievable}
    For all $\lambda_1,\lambda_2>0$ and $\eta\in(0,1/2)$, and for all large enough $n_t$, the RS tagging code construction in Subsection~\ref{sec:tagging_code} gives an $(n_t, m, \mathcal F_m(\leq S(m)), \lambda_1,\lambda_2)$ BFC code with $m=L_S(1/2-\eta,n_t)$. Thus, for all small error probabilities, the code supports an asymptotic computation rate arbitrarily close to $1/2$ with the rate function $L_S$ in \eqref{equ:rate_function}. 
\end{corollary}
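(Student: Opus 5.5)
The plan is to specialize Theorem~\ref{thm:finite_E2} with the exponent $E_2=\eta$. Fix $\lambda_1,\lambda_2>0$ and $\eta\in(0,1/2)$. Take $n_t\geq 2$, set $r=\lceil n_t/2\rceil$ and $T=2^{\lfloor n_t/2\rfloor}$, and choose $m=L_S(1/2-\eta,n_t)$. Theorem~\ref{thm:finite_E2} then yields an $(n_t,m,\mathcal F_m(\leq S(m)),0,2^{-n_t\eta})$ BFC code, provided the set in \eqref{equ:rate_function} is non-empty. Two things therefore remain: verify this non-emptiness for all large $n_t$, and compare the error parameters with $(\lambda_1,\lambda_2)$.

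\textbf{Non-emptiness.} With $R=1/2-\eta$, the right-hand side of \eqref{equ:rate_function} equals
\begin{equation*}
\lceil n_t/2\rceil\, 2^{\lfloor n_t/2\rfloor}\, 2^{-n_t\eta}\;\geq\; \tfrac{n_t}{2}\, 2^{n_t(1/2-\eta)-1}.
\end{equation*}
Since $\eta<1/2$, this quantity tends to infinity with $n_t$. The candidate $m=1$ lies in the set as soon as this bound is at least $S(1)$, which is a fixed finite constant because $S:\N\to\N$. Hence there is an $N_0$ such that the set is non-empty for all $n_t\geq N_0$, and $m=L_S(1/2-\eta,n_t)\geq 1$ is well defined.

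\textbf{Error probabilities.} The FN probability is $0\leq\lambda_1$ for every $n_t$. The FP probability is at most $2^{-n_t\eta}$, which falls below $\lambda_2$ once $n_t\geq \log(1/\lambda_2)/\eta$. Taking $n_t\geq \max\{2,N_0,\log(1/\lambda_2)/\eta\}$ gives the claimed $(n_t,m,\mathcal F_m(\leq S(m)),\lambda_1,\lambda_2)$ code. By Definition~\ref{def:finite_rate}, its computation rate is $1/2-\eta$ with rate function $L_S$. Because $\eta$ can be taken arbitrarily small and the threshold on $n_t$ depends only on $(\eta,\lambda_2,S)$, the achievable asymptotic rate can be made arbitrarily close to $1/2$.

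The only step that is not a direct substitution is the non-emptiness check, since Theorem~\ref{thm:finite_E2} assumes it explicitly. It reduces to the observation that $S(1)$ is a constant while the threshold grows exponentially. A possible subtlety is whether $m$ should also grow with $n_t$ for the rate statement to be meaningful. This also follows from the same bound: for any fixed $M$, the threshold eventually exceeds $MS(M)$, so $m\geq M$ for all large $n_t$, and hence $m\to\infty$.
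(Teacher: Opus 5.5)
Your proposal is correct and follows the paper's proof: you apply Theorem~\ref{thm:finite_E2} with $E_2=\eta$, note that the FN probability is $0$, and take $n_t$ large enough that $2^{-n_t\eta}\leq\lambda_2$. Your explicit checks that the set in \eqref{equ:rate_function} is non-empty for all large $n_t$ (a hypothesis of Theorem~\ref{thm:finite_E2}) and that $m\to\infty$ are left implicit in the paper's proof, so they strengthen the argument without changing its route.
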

\begin{proof}
    Fix $\lambda_1,\lambda_2>0$ and $\eta\in(0,1/2)$. Applying Theorem~\ref{thm:finite_E2} with $E_2=\eta$ and $m=L_S(1/2-\eta,n_t)$ yields an
    \begin{equation}
        \left(n_t, m, \mathcal F_m(\leq S(m)), 0, 2^{-n_t\eta}\right)
    \end{equation}
    BFC code whose finite-blocklength computation rate is $1/2-\eta$ with the rate function $L_S$. For sufficiently large $n_t$, $2^{-n_t\eta}\leq\lambda_2$. Since $\lambda_1,\lambda_2$, and $\eta$ are arbitrary, the rate can be made arbitrarily close to $1/2$ while the error probabilities are made arbitrarily small.
\end{proof}

\begin{remark}
In the rest of the paper, we call $R$ an \textit{achievable computation rate}, if the code we construct has a rate of $R$ asymptotically ($n\rightarrow\infty$) with an arbitrarily small error probability. Corollary \ref{cor:achievable} states that $1/2$ is an achievable computation rate with the rate function $L_S$.
\end{remark}

The established rate of $1/2$ arises from the RS parameter choice $r=\lceil n_t/2\rceil$ and $T=2^{\lfloor n_t/2\rfloor}$. Specifically, half of the channel uses transmit RS symbols $c_U(i)$, while the other half transmits the random index $U$. Hence, the channel uses are asymptotically divided equally between the index and the tag.
Importantly, the achievable rate $1/2$ in Corollary~\ref{cor:achievable} is defined with function-dependent rate function $L_S$. To make the result more explicit and comparable to that in~\cite{BFC-capacity}, we express the same scaling behavior using different rate functions associated with the small-, medium-, and large-weight regimes.
\begin{corollary}
    \label{cor:weight_regimes}
    Let $S^{\leftarrow}(x):=\max\{m\in\N:S(m)\leq x\}$. Under the code construction in Subsection~\ref{sec:tagging_code}, as $m\to\infty$
    \begin{itemize}
        \item if $\log S(m)/\log m\to a\geq 0$, then $1/[2(1+a)]$ is an achievable computation rate with the rate function $L_{\rm small}(R,n)=2^{Rn}$.
        \item if $\log S(m)/\log m\to \infty$, then $1/2$ is an achievable computation rate with the rate function $L_{\rm medium}(R,n)=S^{\leftarrow}(2^{Rn})$.
        \item if $\log S(m)/m\to 1$, then $1/2$ is an achievable computation rate with the rate function $L_{\rm large}(R,n)=Rn$.
    \end{itemize}
\end{corollary}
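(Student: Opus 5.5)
The plan is to reduce all three items to Corollary~\ref{cor:achievable}. For any $\eta\in(0,1/2)$ and large $n_t$, that result gives an RS tagging code with vanishing error and message length $m=L_S(1/2-\eta,n_t)$. This $m$ is the largest integer with
\[
mS(m)\le \lceil n_t/2\rceil 2^{\lfloor n_t/2\rfloor}2^{-n_t\eta}=:B(n_t).
\]
Since $\lceil n_t/2\rceil 2^{\lfloor n_t/2\rfloor}=2^{n_t/2+O(\log n_t)}$, we have $\log B(n_t)=n_t(1/2-\eta)+O(\log n_t)$. In each regime we only need to lower bound this $m$ in terms of the target rate function $L_{\rm small}$, $L_{\rm medium}$, or $L_{\rm large}$ at a rate arbitrarily close to the claimed value. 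We may then shrink the transmitted message length to exactly $L(R',n_t)$. This is harmless because a code for length $m$ yields one for any $m'\le m$ by zero-padding: Proposition~\ref{prop:finite_error} only needs $S(m')\le S(m)$, which holds since $S$ is non-decreasing.

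\emph{Small-weight regime.} Assume $\log S(m)/\log m\to a$. For any $\delta>0$ and large $m$ we have $mS(m)\le m^{1+a+\delta}$. So every $m$ with $(1+a+\delta)\log m\le \log B(n_t)$ is feasible. Hence
\[
\log m\ge \frac{n_t(1/2-\eta)+O(\log n_t)}{1+a+\delta},
\]
which gives $m\ge 2^{R'n_t}$ for any $R'<1/[2(1+a)]$ once $\eta,\delta$ are small and $n_t$ is large. We must also check that $m\to\infty$ as $n_t\to\infty$, so the asymptotic hypothesis on $S$ applies. This is clear because $B(n_t)\to\infty$ and $S$ is finite at each $m$.

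\emph{Medium-weight regime.} Assume $\log S(m)/\log m\to\infty$, so $\log m=o(\log S(m))$. Take $m_0=S^{\leftarrow}(2^{R'n_t})$ with $R'<1/2-\eta$. Then $S(m_0)\le 2^{R'n_t}$, and I need to show that $m_0\le 2^{\epsilon n_t}$ eventually. The argument: if $m_0$ is large, then $S(m_0)\ge m_0^{M}$ for any fixed $M$, so $\log m_0\le R'n_t/M$. Thus $m_0S(m_0)\le 2^{(R'+R'/M)n_t}\le B(n_t)$ for large $n_t$, so $m_0\le m$ and rate $R'$ is achievable; letting $R'\uparrow 1/2$ finishes this case. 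A corner case needs care: $S^{\leftarrow}$ might be infinite or the set might be empty. Non-decreasing $S$ with $S\to\infty$ handles this.

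\emph{Large-weight regime.} Assume $\log S(m)/m\to1$, so $S(m)\le 2^{m(1+\delta)}$ for large $m$. Any $m$ with $m(1+\delta)+\log m\le \log B(n_t)$ is feasible, which gives $m\ge R'n_t$ for $R'<1/2$. This is the same pattern as before.

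I expect the main obstacle to be the medium regime. Its rate function is defined through the generalized inverse $S^{\leftarrow}$, so the comparison must absorb the factor $m$ in $mS(m)$ using the superpolynomial growth of $S$. This has to be done uniformly enough that the loss is only $o(n_t)$ in the exponent, and the integer and ceiling effects of $S^{\leftarrow}$ must be handled cleanly.
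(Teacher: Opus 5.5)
Your proposal is correct and follows essentially the same route as the paper. Both reduce to Corollary~\ref{cor:achievable} and check, regime by regime, that the target length $L_\star(R_0-\eta,n)$ ($2^{R'n_t}$, $S^{\leftarrow}(2^{R'n_t})$, $R'n_t$ respectively) satisfies $\tilde m S(\tilde m)\le \lceil n/2\rceil 2^{\lfloor n/2\rfloor}2^{-n\delta}$, using the same exponent comparisons. Your explicit zero-padding argument for shrinking the message length (using that $S$ is non-decreasing) and your tracking of the $O(\log n_t)$ slack make precise steps the paper leaves implicit.
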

\begin{proof}
    By Corollary~\ref{cor:achievable}, for all $\delta\in(0,1/2)$ and sufficiently large $n$, there exists a BFC code based on the RS tagging construction that supports a message length of at least $L_S(1/2-\delta,n)$ with arbitrarily small error probabilities. 
    By Corollary~\ref{cor:achievable}, to prove a rate $R_0$ is achievable with a rate function $L_\star$, i.e., the code supports computation rates arbitrarily close to $R_0$ with arbitrarily small error probabilities, it is sufficient to show that for all $\eta\in(0,R_0)$ and sufficiently large $n$
    \begin{equation}
        \label{equ:proof_achievability}
        L_\star(R_0-\eta,n)\leq L_S\left(\frac{1}{2}-\delta,n\right).
    \end{equation}

    {\bf Small regime}: When $\log S(m) / \log m \to a$ for $a\geq 0$, consider
    \begin{equation}
        R_0 = \frac{1}{2(1+a)}\quad \text{and}\quad L_{\rm small}(R,n)=2^{Rn}.
    \end{equation}
    For some $\eta\in(0,R_0)$, let
    \begin{equation}
        \tilde m = \left\lceil 2^{n(R_0-\eta)}\right\rceil \geq L_{\rm small}(R_0-\eta,n).
    \end{equation} 
    Since
    \begin{equation}
        \frac{\log\tilde m}{n} \to R_0-\eta\quad \text{and} \quad \frac{\log S(m)}{\log \tilde m} \to a,
    \end{equation}
    we have
    \begin{equation}
        \begin{split}
            \lim_{n\to\infty}\frac{1}{n}\log(\tilde mS(\tilde m)) & = \lim_{n\to\infty}\frac{\log \tilde m}{n}\left(1 + \frac{\log S(\tilde m)}{\log \tilde m}\right) \\
            & = (R_0-\eta)(1+a) \\
            &  = \frac{1}{2}-(1+a)\eta.
        \end{split}
    \end{equation}
    Choose any $0<\delta<(1+a)\eta$, for sufficiently large $n$ we have
    \begin{equation}
        \tilde mS(\tilde m)\leq 2^{n(1/2-\delta)}\leq \frac{n}{2}2^{n(1/2-\delta)}
    \end{equation}
    verifying \eqref{equ:proof_achievability}. Thus, rate $\frac{1}{2(1+a)}$ is achievable with the rate function $L_{\rm small}(R,n)=2^{Rn}$.

    {\bf Medium regime}: When $\log S(m) / \log m\to \infty$, consider
    \begin{equation}
        R_0 = \frac{1}{2}\quad \text{and}\quad L_{\rm medium}(R,n)=S^{\leftarrow}(2^{Rn}).
    \end{equation}
    For some $\eta\in(0,R_0)$, let
    \begin{equation}
        \tilde m = S^{\leftarrow}(2^{n(R_0-\eta)}) \geq L_{\rm medium}(R_0-\eta,n).
    \end{equation} 
    Since
    \begin{equation}
        \frac{\log\tilde m}{\log S(\tilde m)} \to 0\quad\text{and}\quad \frac{\log S(\tilde m)}{n} \leq R_0-\eta,
    \end{equation}
    we have,
    \begin{equation}
        \begin{split}
            \lim_{n\to\infty}\frac{1}{n}\log(\tilde mS(\tilde m)) & = \lim_{n\to\infty}\frac{\log S(\tilde m)}{n}\left(1 + \frac{\log\tilde m}{\log S(\tilde m)}\right) \\
            & \leq (R_0-\eta)(1+0) \\
            & = \frac{1}{2}-\eta.
        \end{split}
    \end{equation}
    Choose any $0<\delta<\eta$, for sufficiently large $n$ we have
    \begin{equation}
        \tilde mS(\tilde m)\leq 2^{n(1/2-\delta)}\leq \frac{n}{2}2^{n(1/2-\delta)}
    \end{equation}
    verifying \eqref{equ:proof_achievability}. Thus, rate $\frac{1}{2}$ is achievable with the rate function $L_{\rm medium}(R,n)=S^{\leftarrow}(2^{Rn})$.

    {\bf Large regime}: When $\log S(m) / m \to 1$, consider
    \begin{equation}
        R_0 = \frac{1}{2}\quad \text{and}\quad L_{\rm larage}(R,n)=Rn.
    \end{equation}
    For some $\eta\in(0,R_0)$, let
    \begin{equation}
        \tilde m = \lceil n(R_0-\eta)\rceil \geq L_{\rm large}(R_0-\eta,n).
    \end{equation} 
    Since
    \begin{equation}
        \frac{\log\tilde m}{n} \to 0\quad, \frac{\log S(m)}{m} \to 1,\quad  \text{and}\quad \frac{\tilde m}{n}\to R_0-\eta,
    \end{equation}
    we have,
    \begin{equation}
        \begin{split}
            \lim_{n\to\infty}\frac{1}{n}\log(\tilde mS(\tilde m)) & = \lim_{n\to\infty}\left(\frac{\log \tilde m}{n} + \frac{\log S(\tilde m)}{\tilde m}\frac{\tilde m}{n}\right) \\
            & = 0 + 1\cdot (R-\eta) \\
            & = \frac{1}{2}-\eta.
        \end{split}
    \end{equation}
    Choose any $0<\delta<\eta$, for sufficiently large $n$ we have
    \begin{equation}
        \tilde mS(\tilde m)\leq 2^{n(1/2-\delta)}\leq \frac{n}{2}2^{n(1/2-\delta)}
    \end{equation}
    verifying \eqref{equ:proof_achievability}. Thus, rate $\frac{1}{2}$ is achievable with the rate function $L_{\rm large}(R,n)=Rn$.
\end{proof}
Corollary~\ref{cor:weight_regimes} explicitly shows that the rate function for BFC problem strongly depends on the Boolean function Hamming weight. A polynomial growth $S(m)=m^a$ allows exponential scaling of the supported message length. When $S(m)$ grows faster yet sub-exponential, its inverse sets the scaling. When $S(m)$ grows exponentially as $2^m$, the supported message length scales linearly with $n$, as in conventional transmission.

For a noiseless binary channel,~\cite{BFC-capacity} establishes achievable rates of $1/(1+2a)$ in the small-weight regime, $1/2$ in the medium-weight regime, and $1$ in the large-weight regime, with corresponding rate functions. 
In comparison, Corollary~\ref{cor:weight_regimes} gives rates of $1/[2(1+a)]$, $1/2$, and
$1/2$, respectively, under our RS tagging construction.
Fig.~\ref{fig:rate_compare}(b) compares the two small-weight achievable rates as functions of parameter $a$. For identification and rank-test ($a=0$) tasks, our code construction achieves one half of the theoretical capacity limit. For the exact-weight task ($a=2$), our construction attains a rate of $1/6$, whereas the achievable rate of $1/5$ is reported in~\cite{BFC-capacity}.
In the medium-weight regime, where $a\to\infty$, our explicit code attains the achievable rate of $1/2$ in~\cite{BFC-capacity} under the rate function $L_{\rm medium}$.
For the large-weight regime, our construction only achieves a rate of $1/2$ with rate function $L_{\rm large}$, whereas transmitting the entire $m$-bit message attains rate $1$ under $L_{\rm large}$. Thus, our construction is uninteresting for this regime, as the conventional message transmission gives a better rate.

\subsection{Numerical Results}
\label{sec:noiseless_numerical}
We consider three Boolean functions: identification with a fixed target, rank-test $f(i)=\1\{{\rm int}(i)\leq 2000\}$ (${\rm int}(\cdot)$ denotes the nonnegative integer represented by the binary string), and the exact-weight $f(i)=\1\{\|i\|_1=2\}$ ($\|\cdot\|_1$ denotes the Hamming weight of the binary string). The message lengths are fixed at $m=100{,}000$, $150$ and $100$, respectively, giving $S=1$, $2001$, and $\binom{100}{2}=4{,}950$.
For $n_t\in\{28,30,\cdots,46\}$, we set $r=n_t/2$, $T=2^r$, and $K = \lceil m/r\rceil$.
We uniformly sample $2{,}000$ negative messages $i\notin f^{-1}(1)$, and enumerate all $T$ indices to compute the FP probability. 

Fig.~\ref{fig:noiseless_noisy}(a) compares the average and maximum FP probabilities with the bound $\min\{1,S(K-1)/T\}$ in Proposition~\ref{prop:finite_error}. For each fixed task, FP probabilities decrease as the channel uses increase. The bound is generally conservative for these Boolean function families. The plotted maximum is over sampled messages, not all possible messages and functions.

\section{Concatenated Extension to Noisy Channels}
\label{sec:noisy}


The RS tagging construction in Subsection~\ref{sec:tagging_code} gives an $n_t=\lceil\log T\rceil + r$ bit codeword $A(i)$. To protect it over a noisy channel, we concatenate it with a channel code~\cite{RS-tagging}. However, this can be highly inefficient since for practical applications, $n_t$ is considerably smaller than the channel code's information block. Thus, we pack tagging codewords from independent messages into each block. Next, we formally introduce the packed concatenated RS tagging code construction and derive its error bounds and asymptotic achievable rate.

\subsection{Packed Concatenated RS Tagging Code Construction}
\label{sec:packing_concatenation}
Consider a binary channel and an $(N_b,N_i,\xi)$ binary channel code with an encoder $\phi_{\rm c}:\{0,1\}^{N_i}\to\{0,1\}^{N_b}$ and a decoder $\psi_{\rm c}:\{0,1\}^{N_b}\to\{0,1\}^{N_i}$, where $N_b$ denotes the blocklength, $N_i$ denotes the information length, and the maximal frame error rate is $\xi$. The channel code rate is $R_c = N_i/N_b$.

Each RS tagging codeword takes $n_t$ bits. For simplicity, we assume $n_t$ divides $N_i$. Hence, one information block can pack $G=N_i/n_t\in\N$ independent tagging codewords. In practical implementations, zero padding will be applied if $n_t$ does not divide $N_i$. Therefore, the effective number of channel uses is
\begin{equation}
    \label{equ:eff_channel_use}
    n_{\rm eff} = \frac{N_b}{G}=\frac{n_t}{R_c}.
\end{equation}
Here, $N_b$ is the physical blocklength of the packed channel code, whereas $n_{\rm eff}=N_b/G$ is a normalization per BFC task.

For each slot $g\in[G]$, let the message be $i_g\in\mathcal{I}_m$. We generate the corresponding noiseless BFC codeword $A_g=(U_g,c_{U_g}(i_g))$ described in Subsection~\ref{sec:tagging_code}. The binary representations of $(A_1,\cdots,A_G)$ are concatenated in a known order, yielding an $N_i$-bit information block $V\in\{0,1\}^{N_i}$. Then, the channel encoder transmits $X^{N_b}=\phi_{\rm c}(V)$.

Given the channel output $Y^{N_b}\sim W(\cdot|X^{N_b})$, the channel decoder computes $\psi_{\rm c}(Y^{N_b})$ and recovers $G$ tagging codewords $\hat A_g=(\hat U_g,\hat C_g)$, $g\in[G]$. For each slot, the receiver can select a function $f_g\in\mathcal F_m$. Utilizing the symbol-level decoding region $D_{f_g}$ defined in \eqref{equ:symbol_decoding_region}, the decision for slot $g$ is 
\begin{equation}
    \hat f_g=\1\{\hat U_g\in[T], \hat C_g\in D_{f_g}(\hat U_g)\}.
\end{equation}
If an index $\hat U_g$ does not correspond to any element of $[T]$ due to channel distortion, the decoder declares zero. Note that if we choose $T = 2^t$ for some integer $t\leq r$, then the binary sequence $\hat U_g$ always corresponds to one element in $[T]$.

\subsection{Finite-Blocklength and Asymptotic Performance}
The packing construction does not give an ordinary single-task BFC code of blocklength $n_{\rm eff}$. Instead, $N_b$ channel uses jointly serve $G$ independent tasks and $n_{\rm eff}$ represents the effective channel uses per task. Therefore, we utilize the notion of \textit{effective BFC code} to state per-task error guarantees.
\begin{proposition}
    \label{prop:noisy_finite_error}
    Consider the packed concatenated RS tagging construction that concatenates $G$ noiseless BFC codewords in Proposition~\ref{prop:finite_error} using an $(N_b,N_i,\xi)$ channel code. For each function evaluation, the construction is an 
    \begin{equation}
        \left(n_{\rm eff},m,\mathcal F_m(\leq S),\xi,\min\left\{1,\frac{S(K-1)}{T}+\xi\right\}\right),
    \end{equation}
    effective BFC code, where $n_{\rm eff}$ is defined in \eqref{equ:eff_channel_use}
\end{proposition}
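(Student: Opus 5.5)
Fix a slot $g\in[G]$, a message $i_g$ and a function $f_g$, and condition on the other slots' messages and random indices (or average over them; the bounds hold uniformly). I will condition on the channel-decoding event. Let $\mathcal E:=\{\psi_{\rm c}(Y^{N_b})\neq V\}$ be the frame-error event. Since the code has maximal frame error rate $\xi$, we have $\Pr\{\mathcal E\mid V=v\}\leq\xi$ for every information block $v$. Hence, after averaging over the random indices $U_1,\dots,U_G$ and the other slots' messages, $\Pr\{\mathcal E\}\leq\xi$ for every choice of $(i_1,\dots,i_G)$.

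\textbf{FN bound.} Suppose $f_g(i_g)=1$. On $\mathcal E^{\rm c}$ the decoder recovers $\hat A_g=A_g=(U_g,c_{U_g}(i_g))$ exactly, with $\hat U_g\in[T]$. By Proposition~\ref{prop:finite_error}, $c_{U_g}(i_g)\in D_{f_g}(U_g)$, so the decision is $\hat f_g=1$. An error therefore requires $\mathcal E$, which gives $P_{\rm FN}\leq\Pr\{\mathcal E\}\leq\xi$.

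\textbf{FP bound.} Suppose $f_g(i_g)=0$. Split the error event as
\begin{equation*}
\Pr\{\hat f_g=1\}\leq\Pr\{\hat f_g=1,\mathcal E^{\rm c}\}+\Pr\{\mathcal E\}.
\end{equation*}
On $\mathcal E^{\rm c}$ the event $\hat f_g=1$ coincides with the noiseless false-positive event $\{c_{U_g}(i_g)\in D_{f_g}(U_g)\}$. Its probability is at most
\begin{equation*}
\Pr\{c_{U_g}(i_g)\in D_{f_g}(U_g)\}\leq \frac{S(K-1)}{T},
\end{equation*}
where $U_g\sim{\rm Unif}[T]$ is independent of the other slots; the bound follows from the collision count in the proof of Proposition~\ref{prop:finite_error}. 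Adding $\Pr\{\mathcal E\}\leq\xi$ and using the trivial bound $1$ gives $\min\{1,S(K-1)/T+\xi\}$. Taking the maximum over $(i_g,f_g)$ yields the stated per-task guarantees with normalization $n_{\rm eff}=N_b/G=n_t/R_c$ from \eqref{equ:eff_channel_use}.

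\textbf{Main subtlety.} The decoding error event $\mathcal E$ is correlated with $U_g$, since $U_g$ is part of $V$. For this reason I avoid conditioning on $\mathcal E^{\rm c}$. I bound $\Pr\{\hat f_g=1,\mathcal E^{\rm c}\}$ directly by the unconditional noiseless collision probability, which holds because $\{\hat f_g=1\}\cap\mathcal E^{\rm c}$ is contained in the collision event. This gives an additive bound, not a conditional one.

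Two further points need care. The decoder's rule for indices outside $[T]$ only matters on $\mathcal E$, so it does not affect the bounds. The maximal, rather than average, frame error rate is what makes $\Pr\{\mathcal E\}\leq\xi$ hold for every configuration of messages and indices.
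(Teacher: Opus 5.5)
Your proposal is correct and matches the paper's proof: the same split on the frame-error event $\mathcal E$, the inclusion $\{\hat f_g=0\}\subseteq\mathcal E$ for the FN bound, and $\Pr\{\hat f_g=1\}\le\Pr\{\text{collision}\}+\Pr\{\mathcal E\}\le S(K-1)/T+\xi$ for the FP bound via the collision count of Proposition~\ref{prop:finite_error}. Your extra remarks are sound points that the paper leaves implicit: bounding the intersection with $\mathcal E^{\rm c}$ rather than conditioning on it, and using the maximal frame error rate to justify $\Pr\{\mathcal E\}\le\xi$ for the random information block $V$.
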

\begin{proof}
    Let $\mathcal{E}:=\{\hat V\neq V\}$ denote the channel decoding error event. The $(N_b,N_i,\xi)$ channel code definition implies $\Pr\{\mathcal{E}\}\leq\ \xi$. 
    We fix an arbitrary slot $g\in[G]$. Suppose $f_g(i_g)=1$, a correct channel decoding gives $\hat A_g=A_g$, and $\hat f_g=1$ according to Proposition~\ref{prop:finite_error}. 
    Therefore, 
    \begin{equation}
        \{\hat f_g=0\}\subseteq \{\hat A_g\neq A_g\}\subseteq \mathcal{E},
    \end{equation}
    and consequently the FN probability for slot $g$ is
    \begin{equation}
        P_{\rm FN}^{(g)} = \Pr\{\hat f_g=0|f_g(i_g)=1\} \leq \xi.
    \end{equation}

    Suppose $f_g(i_g)=0$ and let $\mathcal{C}_g$ denote the noiseless RS tag collision event in slot $g$. Specifically, 
    \begin{equation}
        \mathcal{C}_g:=\{\exists\,j\in f_g^{-1}(1):P_{i_g}(\alpha_{U_g})=P_j(\alpha_{U_g})|f_g(i_g)=0\}.
    \end{equation}
    Since $\{\hat f_g=1\}\cap \mathcal{E}^{\rm c}\subseteq \mathcal{C}_g$ and $\{\hat f_g=1\}\cap \mathcal{E}\subseteq \mathcal{E}$, the FP probability for slot $g$ is
    \begin{equation}
        \begin{split}
            P_{\rm FP}^{(g)} 
            & = \Pr\{\hat f_g=1\} \\
            & = \Pr\{\{\hat f_g=1\}\cap \mathcal{E}^{\rm c}\} + \Pr\{\{\hat f_g=1\}\cap \mathcal{E}\} \\
            & \leq \Pr\{\mathcal{C}_g\} + \Pr\{\mathcal{E}\}  \\
            & \leq \frac{S(K-1)}{T}+\xi.
        \end{split}
    \end{equation}
    
    Both the FN and FP probability bounds are uniform for all slots, messages and Boolean functions, taking the maximum and imposing an upper bound of one prove the proposition.
\end{proof}

Proposition~\ref{prop:noisy_finite_error} preserves the error structure of the noiseless construction. Given correct channel decoding, the FN and FP probabilities remain as Proposition~\ref{prop:finite_error}. A channel failure adds an error probability at most $\xi$. This additive structure reflects a separate source-channel coding scheme. Moreover, packing does not introduce a factor $G$ in the error analysis, and thus does not affect the worst-case error of an individual task.

Consequently, the finite-blocklength tradeoff developed in Theorem~\ref{thm:finite_E2} can be reused in the noisy channel setting. 
Since $n_t=n_{\rm eff}R_c$, we apply Theorem~\ref{thm:finite_E2} with blocklength $n_{\rm eff}R_c$ and represent the rate function in terms of the effective channel use $n_{\rm eff}$. Incorporating the additional channel decoding error $\xi$ yields the following finite-blocklength result.
\begin{corollary}
    \label{cor:noisy_finite_E2}
    Define $L_{S,R_c}(R,n) := L_S(R/R_c,nR_c)$.
    Suppose $r$ and $T$ are chosen as in Theorem~\ref{thm:finite_E2}, and the channel code satisfies $\xi\leq 2^{-N_b E_{\rm ch}}\leq 1$, for some $E_{\rm ch}>0$.
    For all $E_2\in (0,R_c/2)$, set $m=L_{S,R_c}(R_c/2-E_2,n_{\rm eff})$, then the packed concatenated RS tagging construction gives an $(n_{\rm eff}, m, \mathcal F_m(\leq S(m)), \lambda_1,\lambda_2)$ effective BFC code where
    \begin{equation}
        \lambda_1= 2^{-N_bE_{\rm ch}},\quad \lambda_2=\min\{1, 2^{-n_{\rm eff}E_2}+2^{-N_bE_{\rm ch}}\},
    \end{equation}
    and the finite-blocklength computation rate of this code is $R_c/2-E_2$ with the rate function $L_{S,R_c}$.
\end{corollary}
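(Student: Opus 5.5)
The plan is to reduce everything to Theorem~\ref{thm:finite_E2} applied at the tagging blocklength $n_t=n_{\rm eff}R_c$, and then add the channel-decoding error through Proposition~\ref{prop:noisy_finite_error}.

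\textbf{Step 1: reduce to Theorem~\ref{thm:finite_E2}.} By \eqref{equ:eff_channel_use}, $n_t=n_{\rm eff}R_c$. Set $E_2' := E_2/R_c$. Since $E_2\in(0,R_c/2)$, we have $E_2'\in(0,1/2)$. Unfolding the definition of $L_{S,R_c}$ gives
\begin{equation*}
m=L_{S,R_c}(R_c/2-E_2,n_{\rm eff})=L_S\!\left(\tfrac{1}{2}-E_2',\,n_t\right).
\end{equation*}
This is exactly the message length selected in Theorem~\ref{thm:finite_E2} at blocklength $n_t$ with exponent $E_2'$. We also need $n_t\geq 2$ and the defining set in \eqref{equ:rate_function} to be non-empty; both are implicit in $m$ being well defined.

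\textbf{Step 2: the noiseless collision term.} With $r=\lceil n_t/2\rceil$ and $T=2^{\lfloor n_t/2\rfloor}$, the proof of Theorem~\ref{thm:finite_E2} already shows that the RS parameters are feasible, i.e., $1\leq K\leq T\leq q$. It also shows
\begin{equation*}
\frac{S(m)(K-1)}{T}\leq 2^{-n_tE_2'}=2^{-n_{\rm eff}R_c E_2/R_c}=2^{-n_{\rm eff}E_2}.
\end{equation*}

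\textbf{Step 3: add the channel error.} Substituting into Proposition~\ref{prop:noisy_finite_error} with $S=S(m)$ gives an FN bound $\xi\leq 2^{-N_bE_{\rm ch}}$. It gives an FP bound $\min\{1,S(m)(K-1)/T+\xi\}\leq\min\{1,2^{-n_{\rm eff}E_2}+2^{-N_bE_{\rm ch}}\}$, where the inequality holds because $\min\{1,\cdot\}$ is monotone. These are exactly $\lambda_1$ and $\lambda_2$.

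\textbf{Step 4: the rate.} Since $m=L_{S,R_c}(R_c/2-E_2,n_{\rm eff})$, Definition~\ref{def:finite_rate}, applied with blocklength $n_{\rm eff}$ and rate function $L_{S,R_c}$, gives computation rate $R_c/2-E_2$.

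The only point needing care is the bookkeeping in Step 1: the exponent must rescale as $E_2'=E_2/R_c$, so that $n_tE_2'=n_{\rm eff}E_2$ and $E_2'<1/2$. The rate argument must transform as $(R_c/2-E_2)/R_c=1/2-E_2'$, consistent with the definition $L_{S,R_c}(R,n)=L_S(R/R_c,nR_c)$. Once these match, everything else is a direct invocation of earlier results.
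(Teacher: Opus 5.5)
Your proposal is correct and follows the paper's proof. Like the paper, you rescale the exponent to $E_2/R_c\in(0,1/2)$ and identify $L_{S,R_c}(R_c/2-E_2,n_{\rm eff})$ with $L_S(1/2-E_2/R_c,n_t)$; you then apply Theorem~\ref{thm:finite_E2} at blocklength $n_t=n_{\rm eff}R_c$ and add $\xi\leq 2^{-N_bE_{\rm ch}}$ via Proposition~\ref{prop:noisy_finite_error}, and your explicit use of the intermediate bound $S(m)(K-1)/T\leq 2^{-n_{\rm eff}E_2}$ from the theorem's proof is, if anything, a slightly cleaner way to feed that proposition, since it is stated in terms of this quantity.
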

\begin{proof}
    Let $\tilde E_2=E_2/R_c\in(0,1/2)$. Since $n_t=n_{\rm eff}R_c$, 
    the rate function defined in \eqref{equ:rate_function} is
    \begin{equation}
        L_S\left(\frac{1}{2}-\tilde E_2,n_t\right) = L_{S,R_c}\left(\frac{R_c}{2}-E_2, n_{\rm eff}\right).
    \end{equation}
    Applying Theorem~\ref{thm:finite_E2} to each inner RS tagging construction with blocklength $n_t$ gives zero FN probability and a FP probability bounded by $2^{-n_{\rm eff}E_2}$. 
    Combining with Proposition~\ref{prop:noisy_finite_error} yields the probability bounds.

    Finally, the finite-blocklength rate of $R_c/2-E_2$ directly follows from Definition~\ref{def:finite_rate}.
\end{proof}

Similar to Theorem~\ref{thm:finite_E2}, the error exponent $E_2$ provides a tradeoff between error and rate. The factor $R_c$ implies that only a fraction of the physical channel uses are available to carry RS tagging bits. Asymptotically, the achievable computation rate over the noisy channel is thus scaled by the channel code rate. Formally, we have the following asymptotic result.

\begin{corollary}
    Consider packed concatenated RS tagging constructions with fixed channel code rate $R_c$ and $n_{\rm eff}\to\infty$. If the channel code satisfies $\xi\to0$, then $R_c/2$ is an achievable computation rate with the rate function $L_{S,R_c}$. 
\end{corollary}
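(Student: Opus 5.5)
The argument follows the template of Corollary~\ref{cor:achievable}, with Corollary~\ref{cor:noisy_finite_E2} taking the place of Theorem~\ref{thm:finite_E2}. Fix target error levels $\lambda_1,\lambda_2>0$ and a rate gap $\eta\in(0,R_c/2)$. We must show that for all sufficiently large $n_{\rm eff}$ the packed concatenated construction yields an $(n_{\rm eff},m,\mathcal F_m(\leq S(m)),\lambda_1,\lambda_2)$ effective BFC code with $m=L_{S,R_c}(R_c/2-\eta,n_{\rm eff})$. Letting $\eta\downarrow 0$ then gives rates arbitrarily close to $R_c/2$ with arbitrarily small error, which is the meaning of an achievable computation rate in the Remark.

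First, set $E_2=\eta$ and choose $r,T$ as in Theorem~\ref{thm:finite_E2} with $n_t=n_{\rm eff}R_c$. Then $L_{S,R_c}(R_c/2-\eta,n_{\rm eff})=L_S(1/2-\eta/R_c,n_t)$ with $\eta/R_c\in(0,1/2)$. The integrality of $n_t$ and divisibility of $N_i$ are handled by restricting $n_{\rm eff}$ to the admissible sequence, as the construction already assumes.

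Second, I will use Proposition~\ref{prop:noisy_finite_error} directly rather than Corollary~\ref{cor:noisy_finite_E2}. The corollary assumes an exponential frame error rate, but here we are only given $\xi\to0$. Combining the inner bound from Theorem~\ref{thm:finite_E2}, which is zero FN and FP at most $2^{-n_t\eta/R_c}=2^{-n_{\rm eff}\eta}$, with the additive channel term gives
\begin{equation*}
P_{\rm FN}\leq \xi,\qquad P_{\rm FP}\leq \min\{1,2^{-n_{\rm eff}\eta}+\xi\}.
\end{equation*}
Since $\xi\to 0$ as $n_{\rm eff}\to\infty$ and $2^{-n_{\rm eff}\eta}\to0$, for large $n_{\rm eff}$ we have $\xi\leq\lambda_1$ and $2^{-n_{\rm eff}\eta}+\xi\leq\lambda_2$. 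The rate claim then follows from Definition~\ref{def:finite_rate}, exactly as in Corollary~\ref{cor:noisy_finite_E2}.

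The main obstacle is making precise how ``$\xi\to0$'' couples to $n_{\rm eff}\to\infty$. With $R_c$ fixed, $n_t=n_{\rm eff}R_c\to\infty$ and $G=N_i/n_t$. For a single code, $N_b=G\,n_{\rm eff}\geq n_{\rm eff}$, so growing $n_{\rm eff}$ forces $N_b\to\infty$. The hypothesis is read as the channel-code sequence being indexed along with $n_{\rm eff}$, with its frame error rate vanishing. A second check is that the set in \eqref{equ:rate_function} is nonempty for large $n_t$, which is needed to apply Theorem~\ref{thm:finite_E2}. It holds because $S(1)$ is fixed while the right-hand side of \eqref{equ:rate_function} grows like $\tfrac{n_t}{2}2^{n_t(1-\eta/R_c)}$, so $m=1$ eventually qualifies.
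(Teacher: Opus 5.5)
Your proof is correct and essentially identical to the paper's: apply Theorem~\ref{thm:finite_E2} with exponent $\eta/R_c$ at blocklength $n_t=n_{\rm eff}R_c$, combine with Proposition~\ref{prop:noisy_finite_error} to get FN at most $\xi$ and FP at most $\min\{1,2^{-n_{\rm eff}\eta}+\xi\}$, then let $\xi\to0$ and $\eta\downarrow0$. One small slip that does not affect the argument: the right-hand side of \eqref{equ:rate_function} grows like $\tfrac{n_t}{2}2^{n_t(1/2-\eta/R_c)}$, not $\tfrac{n_t}{2}2^{n_t(1-\eta/R_c)}$, but the exponent is still positive since $\eta<R_c/2$, so your nonemptiness check stands.
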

\begin{proof}
    Fix $\lambda_1,\lambda_2>0$ and $\eta\in(0,R_c/2)$. Applying Theorem~\ref{thm:finite_E2} with $E_2=\eta/R_c$ to each inner code, followed by Proposition~\ref{prop:noisy_finite_error}, yields an 
    \begin{equation}
        (n_{\rm eff}, m, \mathcal F_m(\leq S(m)), \xi, \min\{1,2^{-n_{\rm eff}\eta}+\xi\})
    \end{equation}
    effective BFC code with finite-blocklength computation rate $R_c/2-\eta$ with the rate function $L_{S,R_c}$. 
    Since $\xi\to0$, for sufficiently large $n_{\rm eff}$,
    \begin{equation}
        \xi \leq \lambda_1,\quad 2^{-n_{\rm eff}\eta}+\xi\leq\lambda_2.
    \end{equation}
    Since $\lambda_1,\lambda_2$, and $\eta$ are arbitrary, $R_c/2$ is an achievable computation rate under the rate function $L_{S,R_c}$..
\end{proof}
Concatenation separates the two sources of rate loss: the factor $R_c$ arises from reliable transmission over channels, while the remaining $1/2$ is inherent to the RS tagging construction.
Crucially, if a capacity-achieving sequence of channel codes is employed so that $R_c$ approaches capacity $C$, the code construction achieves an asymptotic computation rate of $C/2$.

\begin{figure}[t]
  \centering
  \includegraphics[width=\columnwidth]{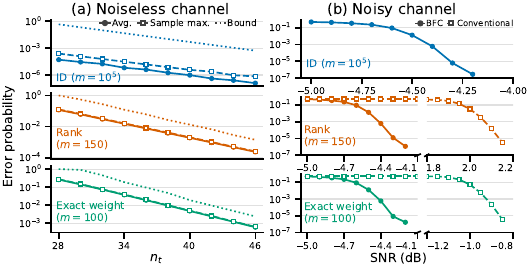}
  \vspace{-0.75cm}
  \caption{BFC task performance. (a) Noiseless FP probabilities; (b) Error probabilities over AWGN for packed BFC vs. conventional message transmission.}
  \label{fig:noiseless_noisy}
\end{figure}
\subsection{Numerical Results}
\label{sec:noisy_numerical}
We utilize the same three tasks and message lengths as in Subsection~\ref{sec:noiseless_numerical} with $n_t=60$, $G=360$, and $N_b=64{,}800$. Therefore, RS parameters are $r=30$, $T=2^{30}$, and $K=3334$, $5$, and $4$, respectively. The packed tagging codewords of length $N_i=21{,}600$ are encoded by a DVB-S2~\cite{DVB-S2} LDPC code of rate $1/3$. We transmit BPSK symbols over an AWGN channel and apply belief-propagation decoding with at most $50$ iterations. The effective channel uses per BFC task are $n_{\rm eff}=180$.
We simulate $10^5$ frames. In each frame, we uniformly sample $180$ positive and $180$ negative messages. Fig.~\ref{fig:noiseless_noisy} shows the average error probabilities over all messages and frames.

For comparison, conventional communication transmits the complete $m$-bit message, applies a channel code, and then evaluates the same Boolean function on the decoded message.
Both schemes use $N_b=64{,}800$ channel uses for $G=360$ tasks (same effective channel use) and the same BPSK symbol energy. For rank-test, the $150$-bit messages fill a rate-$5/6$ LDPC information block of $54{,}000$ bits. For exact-weight, the $100$-bit messages provide $36{,}000$ information bits in a rate-$3/5$ LDPC information block of $38{,}880$ bits and the remaining bits are padding. 
For identification task, the full-messages transmission is infeasible as it would require $100{,}000$ channel uses even if the channel is noiseless. For conventional scheme, we simulate $10^3$ to $3\times10^{7}$ frames, depending on SNR.

Fig.~\ref{fig:noiseless_noisy}(b) shows that the task error probability drops sharply between $-4.6$ and $-4.2$~dB. This waterfall is mainly governed by the FN events that are caused by channel decoding failures, whereas the error floor in the high SNR regime is determined by RS tag collisions.
Crucially, our construction achieves reliable task computation at a substantially lower SNR. At error probability $10^{-5}$, it yields a $6.4$~dB gain for rank-test task and a $3.4$~dB gain for exact-weight task.
These gains stem from conveying task-relevant information through a short tagging code. Using a $60$-bit codeword rather than the $150$-bit and $100$-bit messages for rank-test and exact-weight tasks enables a much lower rate and more reliable channel code.

{\bf Limitation:} Although our coding scheme achieves the optimal scaling behavior (i.e., rate function), the suboptimal rate is due to the single-layer tagging construction. For identification ($S=1$), the rate can be improved by adopting a two-layer code construction, yielding identification capacity-achieving codes~\cite{CWC,RS-performance}. However, such a construction does not directly extend to BFC when $S$ grows with $m$. Explicit capacity-achieving BFC constructions are left for future work.


\begin{thebibliography}{99}

\bibitem{beyond-bits}
D. G\"und\"uz, Z. Qin, I. E. Aguerri, H. S. Dhillon, Z. Yang, A. Yener, K.-K. Wong, and C.-B. Chae, ``Beyond transmitting bits: Context, semantics, and task-oriented communications,'' \emph{IEEE J. Sel. Areas Commun.}, vol. 41, no. 1, pp. 5--41, Jan. 2023.

\bibitem{task-oriented}
J. Shao, Y. Mao, and J. Zhang, ``Learning task-oriented communication for edge inference: An information bottleneck approach,'' \emph{IEEE J. Sel. Areas Commun.}, vol. 40, no. 1, pp. 197--211, Jan. 2022.

\bibitem{propositional-logic}
J. Bao, P. Basu, M. Dean, C. Partridge, A. Swami, W. Leland, and J. A. Hendler, ``Towards a theory of semantic communication,'' in \emph{Proc. IEEE Netw. Sci. Workshop (NSW)}, 2011, pp. 110--117.

\bibitem{BFC}
J. Zhu and M. Frey, ``Beyond identification: Computing Boolean functions via channels,'' in \emph{Proc. IEEE Int. Symp. Inf. Theory (ISIT)}, 2026, pp. 1--6.

\bibitem{ID}
R. Ahlswede and G. Dueck, ``Identification via channels,'' \emph{IEEE Trans. Inf. Theory}, vol. 35, no. 1, pp. 15--29, Jan. 1989.

\bibitem{ID-task}
Y. Zhao, H. Boche, and C. Deppe, ``Joint identification and sensing with noisy feedback: A task-oriented communication framework for 6G,'' arXiv:2603.29649, Mar. 2026.

\bibitem{BFC-capacity}
J. Zhu and M. Frey, ``Capacity regimes for Boolean function computation via channels,'' arXiv:2608.10816, Aug. 2026.

\bibitem{CWC}
S. Verd\'u and V. K. Wei, ``Explicit construction of optimal constant-weight codes for identification via channels,'' \emph{IEEE Trans. Inf. Theory}, vol. 39, no. 1, pp. 30--36, Jan. 1993.

\bibitem{universal-hashing}
K. Kurosawa and T. Yoshida, ``Strongly universal hashing and identification codes via channels,'' \emph{IEEE Trans. Inf. Theory}, vol. 45, no. 6, pp. 2091--2095, Sep. 1999.

\bibitem{RS-tagging}
P. Moulin and R. Koetter, ``A framework for the design of good watermark identification codes,'' in \emph{Proc. SPIE, Security, Steganography, Watermarking Multimedia Contents VIII}, vol. 6072, 2006.

\bibitem{RS-performance}
S. Derebeyo\u{g}lu, C. Deppe, and R. Ferrara, ``Performance analysis of identification codes,'' \emph{Entropy}, vol. 22, no. 10, Art.\ no. 1067, Oct. 2020.

\bibitem{RS-implementation}
R. Ferrara, L. Torres-Figueroa, H. Boche, C. Deppe, W. Labidi, U. J. M\"onich, and V. C. Andrei, ``Implementation and experimental evaluation of Reed--Solomon identification,'' in \emph{Proc. 27th Eur. Wireless Conf. (EW)}, 2022, pp. 7--12.

\bibitem{ID-tutorial}
C. von Lengerke, J. A. Cabrera, M. Reisslein, and F. H. P. Fitzek, ``Codes for identification via channels: Tutorial for communications generalists,'' \emph{IEEE Commun. Surveys Tuts.}, vol. 28, no. 1, pp. 181--223, Jan. 2026.


\bibitem{RS-code}
R. M. Roth, \emph{Introduction to Coding Theory}. Cambridge, U.K.: Cambridge Univ. Press, 2006.

\bibitem{DVB-S2}
European Telecommunications Standards Institute,
``Digital Video Broadcasting (DVB); Second generation framing
structure, channel coding and modulation systems for broadcasting,
interactive services, news gathering and other broadband satellite
applications; Part 1: DVB-S2,''
ETSI EN 302 307-1 V1.4.1, Nov. 2014.

\end{thebibliography}
\end{document}